%
\def\UseBibLatex{1}
\ifx\SoCG\undefined%
\documentclass[11pt]{article}%
\providecommand{\SoCGVer}[1]{}%
\providecommand{\NotSoCGVer}[1]{#1}%
\else%
\makeatletter
\def\input@path{{lipics/}{../lipics/}}
\makeatother
\documentclass[a4paper,USenglish,cleveref,autoref,thm-restate]%
{socg-lipics-v2021}

\hideLIPIcs 
\providecommand{\SoCGVer}[1]{#1}%
\providecommand{\NotSoCGVer}[1]{}%
\fi

\NotSoCGVer{%
   \usepackage[in]{fullpage}%
}%
\usepackage{graphicx}%
\usepackage{amsmath, amssymb}%
\usepackage{wasysym}%
\usepackage{bm}%
\usepackage{mleftright}%
\usepackage[dvipsnames]{xcolor}%
\usepackage{xspace}%
\usepackage{xparse}
\usepackage{caption}
\usepackage{stmaryrd}%
\usepackage[ruled,vlined,linesnumbered]{algorithm2e}%
\usepackage{commath}%
\usepackage{scalerel}%

\usepackage[utf8]{inputenc}

\NotSoCGVer{%
   \usepackage[amsmath,thmmarks]{ntheorem}%
   \theoremseparator{.}%
}
\usepackage{titlesec}%
\titlelabel{\thetitle. }%

\titleformat{\paragraph}[runin]
  {\normalfont\bfseries}
  {\theparagraph}
  {1em}
  {\addperiod}

\newcommand{\addperiod}[1]{#1.}

\DeclareFontFamily{U}{BOONDOX-calo}{\skewchar\font=45 }
\DeclareFontShape{U}{BOONDOX-calo}{m}{n}{
  <-> s*[1.05] BOONDOX-r-calo}{}
\DeclareFontShape{U}{BOONDOX-calo}{b}{n}{
  <-> s*[1.05] BOONDOX-b-calo}{}
\DeclareMathAlphabet{\mathcalb}{U}{BOONDOX-calo}{m}{n}
\SetMathAlphabet{\mathcalb}{bold}{U}{BOONDOX-calo}{b}{n}
\DeclareMathAlphabet{\mathbcalb}{U}{BOONDOX-calo}{b}{n}

\usepackage[inline]{enumitem}

\newlist{compactenumA}{enumerate}{5}%
\setlist[compactenumA]{topsep=0pt,itemsep=-1ex,partopsep=1ex,parsep=1ex,%
   label=(\Alph*)}%

\newlist{compactenuma}{enumerate}{5}%
\setlist[compactenuma]{topsep=0pt,itemsep=-1ex,partopsep=1ex,parsep=1ex,%
   label=(\alph*)}%

\newlist{compactenumI}{enumerate}{5}%
\setlist[compactenumI]{topsep=0pt,itemsep=-1ex,partopsep=1ex,parsep=1ex,%
   label=(\Roman*)}%

\newlist{compactenumi}{enumerate}{5}%
\setlist[compactenumi]{topsep=0pt,itemsep=-1ex,partopsep=1ex,parsep=1ex,%
   label=(\roman*)}%

\newlist{compactitem}{itemize}{5}%
\setlist[compactitem]{topsep=0pt,itemsep=-1ex,partopsep=1ex,parsep=1ex,%
   label=\bullet}%

\usepackage{hyperref}%
\hypersetup{%
   breaklinks,%
   colorlinks=true,%
   urlcolor=[rgb]{0.25,0.0,0.0},%
   linkcolor=[rgb]{0.5,0.0,0.0},%
   citecolor=[rgb]{0,0.2,0.445},%
   filecolor=[rgb]{0,0,0.4}
}
\numberwithin{figure}{section}%
\numberwithin{table}{section}%
\numberwithin{equation}{section}%

\definecolor{blue25emph}{rgb}{0, 0, 11}
\newcommand{\emphic}[2]{%
   \textcolor{blue25emph}{%
      \textbf{\emph{#1}}}%
   \index{#2}}

\newcommand{\emphi}[1]{\emphic{#1}{#1}}

\definecolor{almostblack}{rgb}{0, 0, 0.3}
\providecommand{\emphw}[1]{{\textcolor{almostblack}{\emph{#1}}}}%

\definecolor{DGray}{RGB}{94,94,94}

\newcommand{\myqedsymbol}{\rule{2mm}{2mm}}
\newcommand{\eps}{\varepsilon}%

\newcommand{\Null}[1]{\mathsf{null}}%

\NotSoCGVer{%
\theoremstyle{plain}%
\newtheorem{theorem}{Theorem}[section]
\newtheorem{FakeCounter}{FakeCounter}%

\newtheorem{lemma}[theorem]{Lemma}

\newtheorem{corollary}[theorem]{Corollary}
 \newtheorem{fact}[theorem]{Fact}

\newtheorem{invariant}[theorem]{Invariant}
\newtheorem{question}[theorem]{Question}

\newtheorem{prop}[theorem]{Proposition}
\newtheorem{openproblem}[theorem]{Open Problem}
\newtheorem{problem}[theorem]{Problem}

\theoremstyle{plain}%
\theoremheaderfont{\sf}%
\theorembodyfont{\upshape}%

\newtheorem*{defn:unnumbered}[FakeCounter]{Definition}
\newtheorem*{remark:unnumbered}[FakeCounter]{Remark}%

\newtheorem{definition}[theorem]{Definition}
\newtheorem{defn}[theorem]{Definition}

\newtheorem{xca}[theorem]{Exercise}
\newtheorem{exercise_h}[theorem]{Exercise}
\newtheorem{assumption}[theorem]{Assumption}%

\theoremheaderfont{\em}%
\theorembodyfont{\upshape}%
\theoremstyle{nonumberplain}%
\theoremseparator{}%
\theoremsymbol{\myqedsymbol}%
\newtheorem{proof}{Proof.}%
}

\SoCGVer{%
   \theoremstyle{plain}%

   \theoremstyle{plain}%
   \newtheorem{defn}[theorem]{Definition}
}%

\newcommand{\atgen}{\symbol{'100}}
\newcommand{\SarielThanks}[1]{\thanks{Department of Computer Science;
      University of Illinois; 201 N. Goodwin Avenue; Urbana, IL,
      61801, USA; {\tt sariel\atgen{}illinois.edu}; {\tt
         \url{http://sarielhp.org/}.} #1}}

\newcommand{\EWR}[1]{\thanks{Department of Computer Science;
		University of Illinois; 201 N. Goodwin Avenue; Urbana, IL,
		61801, USA; {\tt erobson2\atgen{}illinois.edu}; {\tt
			\url{https://eliotwrobson.github.io/}.} #1}}

\newcommand{\HLinkShort}[2]{\hyperref[#2]{#1\ref*{#2}}}
\newcommand{\HLink}[2]{\hyperref[#2]{#1~\ref*{#2}}}
\newcommand{\HLinkPage}[2]{\hyperref[#2]{#1~\ref*{#2}%
      $_\text{p\pageref{#2}}$}}
\newcommand{\HLinkPageOnly}[1]{\hyperref[#1]{Page~\refpage*{#1}%
      $_\text{p\pageref{#1}}$}}

\newcommand{\HLinkSuffix}[3]{\hyperref[#2]{#1\ref*{#2}{#3}}}
\newcommand{\HLinkPageSuffix}[3]{\hyperref[#2]{#1\ref*{#2}%
      #3$_\text{p\pageref{#2}}$}}

\newcommand{\seclab}[1]{\label{sec:#1}}
\providecommand{\secref}[1]{\HLink{Section}{sec:#1}}
\renewcommand{\secref}[1]{\HLink{Section}{sec:#1}}

\newcommand{\corlab}[1]{\label{cor:#1}}
\newcommand{\corref}[1]{\HLink{Corollary}{cor:#1}}%

\providecommand{\deflab}[1]{\label{def:#1}}
\newcommand{\defref}[1]{\HLink{Definition}{def:#1}}

\newcommand{\defrefY}[2]{\hyperref[def:#1]{#2}}

\newcommand{\lemlab}[1]{\label{lemma:#1}}
\providecommand{\lemref}[1]{\HLink{Lemma}{lemma:#1}}%
\renewcommand{\lemref}[1]{\HLink{Lemma}{lemma:#1}}%

\providecommand{\eqlab}[1]{}%
\renewcommand{\eqlab}[1]{\label{equation:#1}}
\newcommand{\Eqref}[1]{\HLinkSuffix{Eq.~(}{equation:#1}{)}}

\newcommand{\Prob}[1]{\mathop{\ProbLTR} \mleft[ #1 \mright]}%

\newcommand{\disk}{\Mh{\Circle}}%
\newcommand{\pth}[2][\!]{\mleft({#2}\mright)}%
\newcommand{\ceil}[1]{\left\lceil {#1} \right\rceil}

\newcommand{\ProbCond}[2]{\mathop{\ProbLTR}\!\left[%
       #1 \;\middle\vert\; #2 \right]}

\newcommand{\ExChar}{\mathbb{E}}%
\newcommand{\ExSym}{\mathop{\ExChar}}%
\newcommand{\Ex}[2][\!]{\ExSym#1\pbrcx{#2}}

\newcommand{\ProbLTR}{\mathbb{P}}%

\newcommand{\Set}[2]{\left\{ #1 \;\middle\vert\; #2 \right\}}

\newcommand{\Disks}{\Mh{\mathcal{C}}}%
\newcommand{\DS}{\Disks}%

\renewcommand{\Re}{\mathbb{R}}%
\renewcommand{\th}{th\xspace}

\renewcommand{\d}{\operatorname{\Mh{\mathsf{d}}}}
\newcommand{\depthX}[1]{\d \pth{#1}}
\newcommand{\depthY}[2]{\d \pth{#1,#2}}
\providecommand{\Mh}[1]{{#1}}%

\IfFileExists{.latex_printer_friendly}{\def\GenPrinterVer{1}}{}%

\ifx\GenPrinterVer\undefined
   \IfFileExists{.latex_color}{\def\GenColorMath{1}}{}
\else
\fi

\ifx\GenColorMath\undefined
\else
\renewcommand{\Mh}[1]{{\textcolor{red}{#1}}}%
\fi
\providecommand{\BibLatexMode}[1]{}
\providecommand{\BibTexMode}[1]{#1}

\ifx\UseBibLatex\undefined%
  \renewcommand{\BibLatexMode}[1]{}
  \renewcommand{\BibTexMode}[1]{#1}
\else
  \renewcommand{\BibLatexMode}[1]{#1}
  \renewcommand{\BibTexMode}[1]{}
\fi

\newcommand{\UsePackage}[1]{%
  \IfFileExists{../styles/#1.sty}{%
      \usepackage{../styles/#1}%
   }{%
      \IfFileExists{./styles/#1.sty}{%
         \usepackage{styles/#1}%
      }{%
         \usepackage{#1}%
      }%
   }%
}

\BibLatexMode{%
   \usepackage{sariel_biblatex}%
   \usepackage[english]{babel}%
   \usepackage{csquotes}
}

\newcommand{\pa}{\Mh{p}}%
\newcommand{\pb}{\Mh{q}}%
\newcommand{\cardin}{\abs}%

\providecommand{\G}{\Mh{G}}%

\renewcommand{\G}{\Mh{G}}%
\newcommand{\GA}{\Mh{H}}%

\newcommand{\GD}{\G_{\DS}}

\newcommand{\IRX}[1]{\left\llbracket #1 \right\rrbracket}

\newcommand{\GIY}[2]{\G_{#1} \pth{#2}}
\newcommand{\Arr}{\Mh{\mathcal{A}}}%
\newcommand{\ArrX}[1]{\Arr\pth{#1}}%
\newcommand{\ArrY}[2]{\Arr_{#1}\pth{#2}}%
\newcommand{\pbrcx}[1]{\left[ {#1} \right]}%
\newcommand{\edge}{\Mh{\mathsf{e}}}%
\newcommand{\EdgesX}[1]{\Mh{E}\pth{#1}}%
\newcommand{\EdgeSet}{\Mh{\mathsf{E}}}%
\newcommand{\DSX}[1]{\Mh{\mathcal{F}}_{#1}}
\newcommand{\SZone}{\Mh{\mathcal{Z}}}%
\newcommand{\SZoneY}[3]{\Mh{\mathcal{Z}}_{#1} \pth{#2 - #3}}%
\newcommand{\GS}{\Mh{H}}%
\newcommand{\BSet}{\Mh{B}}%

\providecommand{\remove}[1]{}%

\newcommand{\face}{\Mh{f}}

\newcommand{\VV}{\Mh{V}}%
\newcommand{\VX}[1]{\VV\pth{#1}}%
\newcommand{\EG}{\Mh{E}}%
\newcommand{\EGX}[1]{\Mh{E}\pth{#1}}%

\newcommand{\constA}{\mathsf{c}_\alpha}

\newcommand{\cExpander}{\mathsf{c_e}}%

\newcommand{\nc}{\Mh{\xi}}
\newcommand{\tldO}{\scalerel*{\widetilde{O}}{j^2}}%
\newcommand{\nbrX}[1]{\Mh{N}\pth{#1}}
\newcommand{\nbrY}[2]{\Mh{N}_{#1}\pth{#2}}
\newcommand{\spanner}{\Mh{\mathcal{S}}}%
\newcommand{\spX}[1]{\spanner\pth{#1}}%

\newcommand{\IGC}{\G}
\newcommand{\IGX}[1]{\IGC\pth{#1}}
\newcommand{\n}{\Mh{\nu}}%
\newcommand{\cExpSize}{\mathsf{c}_{s}}%
\newcommand{\cExpRep}{\mathsf{c}_{r}} %
\newcommand{\CC}{\Mh{\mathcal{F}}}

\newcommand{\Leftover}{\Mh{L}}%
\newcommand{\nL}{\Mh{\ell}}%


\BibLatexMode{\bibliography{reliable_disks}}%

\title{Sparsifying Disk Intersection Graphs for Reliable
   Connectivity}%

\NotSoCGVer{%
   \author{%
      Sariel Har-Peled%
      \SarielThanks{Work on this paper was partially supported by NSF AF
         award CCF-1421231.%
      }%
      \and%
      Eliot Wong Robson%
      \EWR{}%
   }%
}

\SoCGVer{%
   \author{Eliot Wong Robson}%
   {Department of Computer Science, University of Illinois, 201
      N. Goodwin Avenue, Urbana, IL 61801, USA}%
   {erobson2@illinois.edu}%
   {https://orcid.org/0000-0002-1476-6715}%
   {}%
   \author{Sariel Har-Peled}%
   {Department of Computer Science, University of Illinois, 201
      N. Goodwin Avenue, Urbana, IL 61801, USA}%
   {sariel@illinois.edu}%
   {https://orcid.org/0000-0003-2638-9635}%
   {Work on this paper was partially supported by NSF AF award
      CCF-1907400.}%
}%

\SoCGVer{%
   \authorrunning{S. Har-Peled and E. W. Robson} %
   \Copyright{Sariel Har-Peled and Eliot Wong Robson}%
   \ccsdesc[500]{Theory of computation~Computational geometry}%
   \keywords{Spanners, intersection graphs, reliability, expanders}%
}

\begin{document}

\maketitle

\begin{abstract}
    The intersection graph induced by a set $\Disks$ of $n$ disks can
    be dense. It is thus natural to try and sparsify them, while
    preserving connectivity. Unfortunately, sparse graphs can always
    be made disconnected by removing a small number of vertices. In
    this work, we present a sparsification algorithm that maintains
    connectivity between two disks in the computed graph, if the
    original graph remains ``well-connected'' even after removing an
    arbitrary ``attack'' set $\BSet \subseteq \Disks$ from both
    graphs. Thus, the new sparse graph has similar reliability to
    the original disk graph, and can withstand
    catastrophic failure of nodes while still providing a
    connectivity guarantee for the remaining graph. The new graphs has
    near linear complexity, and can be constructed in near linear time.

    The algorithm extends to any collection of shapes in the plane,
    such that their union complexity is near linear.
\end{abstract}

\section{Introduction}

Given a set $\DS$ of $n$ disks in the plane, their intersection graph
is formed by connecting each pair of intersecting disks by an
edge. While this graph has an implicit representation of linear size,
its explicit graph representation might be of quadratic size. It is
thus natural to try and replace this graph by a sparser graph that retains
some desired properties, such as preserving distances (i.e., a
spanner), or preserving connectivity.

Such questions becomes significantly more challenging if one want to
preserve such properties under network failures.  The main obstacle is
that a sparse graph can always be made disconnected by deleting the
neighbors of a low degree vertex. Thus, the minimum degree of a graph
has to be high, failing to provide the desired sparsity, especially if
the graph has to withstand large attacks. An alternative approach is
to provide such a guarantee only to most of the remaining graph (after
the failure), allowing some parts of the graph to be ignored. For geometric 
spanners there has been recent work on constructing such reliable spanners
\cite{bho-sda-20,bho-srsal-20,hmo-rsms-21}.

Here we consider the case of geometric intersection graphs, and
connectivity guarantees.  Specifically, given a set of disks $\Disks$ the corresponding intersection graph $\IGX{\Disks}$, our goal is to compute a
sparse subgraph $\GA \subseteq \G$, such that its connectivity is
robust to vertex deletion. However, disk intersection graphs might
have cut vertices -- that is, a single vertex whose removal
disconnects the graph. So even a small number of disks being removed can
dramatically effect the connectivity.

As such, a milder desired property is that for any attack set
$\BSet \subseteq \Disks$, the graphs $\G - \BSet$ and $\GA - \BSet$
(that is, the graphs remaining after the vertices of $\BSet$ are
deleted) have similar connectivity. Even that is not achievable if
the graph $\GA$ is sparse, as one can delete the neighbors of a disk
in the sparse graph, which would leave it isolated in $\GA$ (while
still connected to the remaining graph in $\G$).

\paragraph*{$\eps$-safe paths}

Instead, we seek to provide a more geometric guarantee -- a point $\pa$ is
\emphw{$\eps$-safe} if the attack set $\BSet$ removes, at most, a
$(1-\eps)$-fraction of the disks covering $\pa$. A curve is thus $\eps$-safe
if all the points along it are $\eps$-safe, and two disks are \emphw{$\eps$-safely
   connected} if there is an $\eps$-safe path connecting them. Our
goal here is to pre-build a sparse graph $\GA$ that guarantees
connectivity, for any attack set $\BSet$, in the graph $\GA - \BSet$,
for any two disks that are $\eps$-safely connected in $\G -
\BSet$. Importantly, the graph $\GA$ is constructed before $\BSet$ is
known, and the required property should hold for any attack set
$\BSet$. A graph $\GA$ with this property is \emphw{$\eps$-safely
   connected}.

\subsection*{Our result}

Given a set $\Disks$ of $n$ disks in the plane, and a parameter
$\eps \in (0,1)$, let $\G = \IGX{\Disks}$ denote the intersection graph induced by
$\Disks$. We present a near linear time construction of a sparse
subgraph $\GA$ of $\G$, that is $\eps$-safely connected.

\paragraph*{Idea} %
For a point $\pa \in \Re^2$, let $\GA_\pa$ denote the induced subgraph
of $\GA$ over the set of disks
\begin{equation*}
    \Disks \sqcap \pa = \Set{ \disk \in \Disks}{ \pa \in \disk}.
\end{equation*}
For any point $\pa$ in the plane, the graph $\G_\pa$ is a clique. Our
construction replaces this clique by the graph $\GA_\pa$, which is a
``strong'' expander. This property by itself is sufficient to
guarantee that $\GA$ is safely connected. The challenge is how to
construct $\GA$ such that it has the desired expander property
\emph{for all points in the plane}, while being sparse, and
furthermore do this in near linear time (as building the graph
explicitly takes quadratic time).

\paragraph*{Random coloring, sparsification, and expansion}

It is well known that random coloring of vertices can be used to
sparsify a graph, by keeping only edges that connect certain
colors. For example, if you randomly color a clique of $k$ vertices by
$2k$ colors, and keep only edges that connect vertices that belong to
consecutive colors, then the resulting graph has (in expectation)
$\Theta(k)$ edges. This collection of edges is almost a matching.  It
is well known that the union of three random matchings form an
expander with high probability \cite{p-otcoac-73}. Thus, if one
repeats the random coloring idea suggested above a sufficient number of
times, the union of the collection of edges results in an expander.

\paragraph*{Stop in the shallow parts, before getting too deep}
The problem is that if we randomly color the given set $\Disks$, by
$k$ colors, the deepest point $\pa$ in $\ArrX{\Disks}$ might be of
depth $n$. As such, a single random coloring would replace $\G_\pa$ by
a graph that has $O(n^2/k)$ edges, which is way too many edges to be
used in a sparse graph.

To avoid this problem, we only add edges of the coloring if they
correspond to shallow regions (i.e., regions of depth $\approx k$).
The Clarkson-Shor technique readily implies that the number of edges
added by this is roughly $\tldO(n)$, where $\tldO$ hides polynomial
terms in $1/\eps$ and $\log n$. Repeating this sufficient number of
times (i.e., polylogarithmic), provides the desired property for faces
that are of depth in the range $k$ to $2k$. Repeating this for
exponential scales, to cover all faces of the arrangement by good
``depth'' expanders.

\paragraph*{Generalizing to other families of objects}
The key property of disk graphs which we use to analyze our
algorithm's runtime and size of the resulting spanner is the bound on
the union complexity \cite{cs-arscg-89}. As a result, our techniques
immediately generalize to intersection graphs of other families of
objects with near linear bounds on their union complexity -- for
example, the union complexity of $n$ fat triangles is $O(n \log^* n)$
\cite{abes-ibulf-14}, and our construction would work verbatim for
this case.

\bigskip Beyond the result itself, we believe our combination of
techniques from traditional computational geometry and expanders is
quite interesting, and should be useful for other problems.

\section{Settings}

\subsection{Notations}

For a positive integer $k$, let $\IRX{k} = \{ 1,\ldots, k\}$.

For a
graph $\G = (\Disks, \EG)$, for a set $X \subseteq \Disks$, we denote
by $\G_X = \bigl(X,\Set{uv \in \EG}{u,v \in X}\bigr)$ the
\emphi{induced subgraph} of $\G$ over $X$. For a set
$Y \subseteq \Disks$, let $\G - Y = \IGX{\Disks - Y}$ denote
the graph remaining from $\G$ after deleting all the vertices
of $Y$. Similarly, for $y \in \Disks$, we use the shorthand
$\G-y = \G - \set{y}$.

For a set of vertices $S \subseteq \VX{\G}$, let
$\nbrX{S} = \Set{x \in \VV}{ y \in S, xy \in \EGX{\G}}$ be the
\emphi{neighborhood} of $S$.

\subsubsection{Intersection graph}

For a set of regions $\Disks$  in the plane, let
\begin{equation*}
    \IGC = 
    \IGX{\Disks}
    =%
    \pth{\Disks,
    \Set{\disk_1 \disk_2}{ \disk_1 \cap \disk_2 \neq \varnothing,
       \; \disk_1, \disk_2 \in \Disks}
  	}
\end{equation*}
denote the \emphi{intersection graph} of $\Disks$. Throughout this paper,
we assume that the regions are in general position. For a point $\pa$
in the plane, let
\begin{equation*}
    \Disks \sqcap \pa = \Set{ \disk \in \Disks}{ \pa \in \disk}
\end{equation*}
be the set of disks of $\Disks$ covering $\pa$. The induced subgraph
is denoted by
\begin{math}
    \IGC_{\pa} = \IGC_{\Disks \sqcap \pa}.
\end{math}

\subsection{Problem statement}
Given a set of disks $\Disks$ in the plane, consider the induced
intersection graph $\G = \IGX{\Disks} = (\Disks, \EG)$, where
\begin{math}
    \EG = \Set{\disk_1 \disk_2}{\disk_1 \cap \disk_2 \neq \varnothing,
       \; \disk_1, \disk_2 \in \Disks}.
\end{math}
We consider some arbitrary (unknown) \emphi{attack set}
$\BSet \subset \Disks$ -- the disks in this set are being deleted, and
we are interested in the connectivity of the remaining graph
$\IGX{\Disks - \BSet}$ (that is, the induced
subgraph of $\G$ over $\Disks - \BSet$).

\begin{definition}
    \deflab{depth}%
    For a point $\pa \in \Re^2$, its \emphi{depth} is
    \begin{math}
        \depthX{\pa}%
        =%
        \depthY{\pa}{\Disks}%
        =%
        \cardin{\Disks \sqcap \pa}
    \end{math}
    is the number of disks in $\Disks$ that contain $\pa$.
\end{definition}

\begin{definition}
    A point $\pa$ is \emphi{$\eps$-safe}, with respect to an attack
    set $\BSet$, if
    \begin{math}
        \depthY{\pa}{\Disks - \BSet} \geq \eps
        \depthY{\pa}{\Disks}.
    \end{math}
    Namely, a point has at least an $\eps$-fraction of the disks
    originally covering it, even after the disks of the attack $\BSet$
    are removed.
\end{definition}

\begin{defn}
    \deflab{safe:zone}%
    Given a set of disks $\Disks$, the \emphw{arrangement} of
    $\Disks$, denoted by $\ArrX{\Disks}$ is the partition of the plane
    into faces, vertices and edges induced by $\Disks$, see
    \cite{bcko-cgaa-08}. A face/edge/vertex is thus
    \emphw{$\eps$-safe} if any point in it is $\eps$-safe. The union
    of all safe points forms the \emphi{$\eps$-safe zone}
    $\SZone = \SZoneY{\eps}{\Disks}{\BSet}$.

    Two points $\pa$ and $\pb$ in the plane are \emphi{$\eps$-safely
       connected}, if $\pa$ and $\pb$ belong to the same connected
    component of $\SZone$.
\end{defn}

\paragraph*{The problem}
The task at hand is to construct a sparse graph $\GS \subseteq \GD$
such that for any attack set $\BSet$, and any two $\eps$-safe points
$\pa,\pb$ that lie in the same connected component of
$\SZone = \SZoneY{\eps}{\Disks}{\BSet}$, there are two disks
$\disk_\pa, \disk_\pb \in \Disks - \BSet$, such that
$\pa \in \disk_\pa, \pb \in \disk_\pb$, and $\disk_\pa$ and
$\disk_\pb$ are connected in $\GS - \BSet$.

\subsection{Expander construction via random coloring}

\subsubsection{Expander Construction} %
\seclab{expander:construction}

Our purpose here is to build a sparse ``expander-like'' graph over a
set $\VV$ of $\n$ objects. Let $\eps \in (0,1)$ be a parameter.  Let
$\nc$ be a fixed number, such that $\n \leq \nc \leq 2\n$.  For some
sufficiently large constant $\cExpander>2$, consider the following
algorithm of generating a random graph.

\begin{description}
    \item The algorithm repeats the following
    $M = \cExpander \ceil{\eps^{-2}}$ times:

    It randomly colors the elements of $\VV$ with $\nc$ colors. For each such coloring, it connects two
    objects by an edge if their colors differ by $1$ (modulo $\nc$).
\end{description}
The final graph $\G$ (over $\VV$) results from using all the computed
edges in all these iterations.

\subsubsection{Proving expansion properties}

\begin{lemma}
    \lemlab{first}%
    Let $S$ be a set of $\n$ objects, and
    $\chi: S \rightarrow \IRX{\nc}$ be a random coloring of $S$, for
    $\nc \geq \n$.  Let
    $X = \cardin[1]{\chi(S)} = \cardin[1]{\Set{\chi(v)}{v \in S}}$ be
    the number of different colors used in $S$. Then, we have
    \begin{math}
        \Prob{X <  \n/e^2}
        <%
        \exp\pth{-\n}.
    \end{math}
\end{lemma}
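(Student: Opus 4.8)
The plan is to view the random coloring as a balls-and-bins process: the $\n$ elements of $S$ are balls, the $\nc$ colors are bins, and each ball is assigned a color independently and uniformly at random. Then $X$ is exactly the number of non-empty bins. We want to show that the probability of ending up with fewer than $\n/e^2$ occupied bins is at most $\exp(-\n)$.

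First I would pass to an integer threshold: let $m$ be the largest integer with $m < \n/e^2$, so that the event $\{X < \n/e^2\}$ coincides with $\{X \le m\}$; note $m < \n \le \nc$, and since $\nc \ge \n$ we have $m/\nc \le m/\n < e^{-2}$. The core step is then a union bound over which bins are occupied: if $X \le m$, all $\n$ balls fall within some set of $m$ bins, there are $\binom{\nc}{m}$ such sets, and for a fixed set the probability that all balls land inside it is $(m/\nc)^{\n}$. Hence
\[
\Prob{X \le m} \le \binom{\nc}{m}\pth{\frac{m}{\nc}}^{\n}.
\]

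Finally I would plug in the standard estimate $\binom{\nc}{m} \le (e\nc/m)^m$ together with $m/\nc < e^{-2}$ and $\n - m > 0$:
\[
\Prob{X \le m}
\le \pth{\frac{e\nc}{m}}^{m}\pth{\frac{m}{\nc}}^{\n}
= e^{m}\pth{\frac{m}{\nc}}^{\n - m}
< e^{m}\cdot e^{-2(\n - m)}
= e^{3m - 2\n}.
\]
Since $m < \n/e^2 < \n/7$, the exponent satisfies $3m - 2\n < (3/7 - 2)\n < -\n$, which yields the claimed inequality. (The degenerate case $m = 0$, i.e. $\n \le 7$, is trivial since $X \ge 1$ always.)

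The argument is short; the only real subtlety is getting a strong enough constant. A black-box concentration bound — e.g. bounded differences / McDiarmid, using that $\Ex{X} \ge (1 - 1/e)\n$ and that recoloring one element changes $X$ by at most one — only gives a tail of the form $\exp(-c\n)$ with $c < 1$, which is not enough. The direct counting (Clarkson–Shor–style) union bound above is what delivers the $\exp(-\n)$ bound, and in fact with a comfortable margin.
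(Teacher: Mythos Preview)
Your proof is correct and is essentially the same argument as the paper's: a union bound over the $\binom{\nc}{m}$ possible sets of occupied bins, the estimate $\binom{\nc}{m}\le(e\nc/m)^m$, and then $m/\nc<e^{-2}$ to get $e^{3m-2\n}<e^{-\n}$. Your handling of the integer threshold and the degenerate case is slightly more careful than the paper's, but the core computation is identical.
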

\begin{proof}
    Let $T \subseteq \IRX{\nc}$ be a fixed set of $t = \beta \n$ colors. We let
    \begin{math}
        \gamma =%
        \Prob{\chi(S) \subseteq T} =%
        (t/\nc)^\n.
    \end{math}
    As such, we have that the probability that
    $\cardin{\chi(S)} \leq t$, is at most
    \begin{align*}
      &\sum_{T \subseteq \IRX{\nc}, \cardin{T} = t}%
        \Prob{\chi(S) \subseteq T}
        \leq%
        \binom{\nc}{t} \gamma
        \leq%
        e^t \pth{\frac{\nc}{t}}^t\pth{\frac{t}{\nc}}^{\n}
        =%
        e^t\pth{\frac{t}{\nc}}^{\n-t}
        \leq
        e^t\beta^{\n-t}
        \pth{\frac{\n}{\nc}}^{\n-t}.
    \end{align*}
    Since  $t \leq \beta \n$ and letting $\beta \leq 1/e^2$, we have
    \begin{math}
        e^t \beta^{\n-t} \leq \exp( t - 2(\n-t) ) \leq%
        \exp( -\n ).
  \end{math}
\end{proof}

For a coloring $\chi_i$, and an object $s \in \VV$, we denote the set
of all elements in $\VV$ that are connected to $s$ in this coloring by
\begin{equation*}
    \nbrY{i}{s}
    =%
    \Set{ t \in \VV \bigl.}{ |\chi_i(t) - \chi_i(s)| \equiv 1  \bmod \nc}.
\end{equation*}

\begin{defn}
    Let $\eps \in (0,1)$ be a parameter.  A graph $\G =(\VV, \EG)$
    with $\n$ vertices, is an \emphi{$\eps$-connector} if
    \begin{equation*}
        \forall S \subseteq \VV,
        \cardin{S} \geq \eps \n \implies
        \cardin{\nbrX{S}} > (1-\eps)\n.        
    \end{equation*}    
\end{defn}

\begin{lemma}
    \lemlab{eps_set_expansion}%
    Let $\cExpSize, \cExpRep$ be the two sufficiently large constants
    used in the above construction, and let $\G$ be the graph
    built. Then, for $\n \geq \cExpSize/\eps^2$, and
    $M \geq \cExpRep/\eps^2$, we have, with probability
    $\geq 1 - \exp(-4\n)$, that $\G$ is an $\eps/4$-connector.
\end{lemma}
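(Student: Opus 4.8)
\noindent\emph{Proof plan.}
The plan is to bound the probability that $\G$ fails to be an $\eps/4$-connector by at most $\exp(-4\n)$, via a union bound over pairs of vertex sets, where the whole weight of the argument lies in one estimate for a single random coloring. First I would reduce to a clean combinatorial event. Put $m = \floor{(\eps/8)\n}$. If $\G$ is not an $\eps/4$-connector, there is a set $S_0$ with $\cardin{S_0} \geq (\eps/4)\n$ and $\cardin{\nbrX{S_0}} \leq (1-\eps/4)\n$. Choosing $S \subseteq S_0$ with $\cardin{S} = m$ only shrinks the neighborhood, so $\cardin{\VV \setminus \nbrX{S}} \geq (\eps/4)\n$, and since $\cardin{S} = m \leq (\eps/8)\n$ we may pick $T \subseteq \pth{\VV \setminus \nbrX{S}} \setminus S$ with $\cardin{T} = m$. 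Then $S$ and $T$ are disjoint sets of size $m$ with no edge of $\G$ between them. Hence it suffices to bound the probability that \emph{some} disjoint pair $(S,T)$ of $m$-sets has no $S$--$T$ edge in $\G$.

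The key step is to estimate, for a fixed disjoint pair $S,T$ with $\cardin{S}=\cardin{T}=m$, the probability that a single coloring $\chi_i$ creates no $S$--$T$ edge. Since $\nc \geq \n \geq m$, \lemref{first} applies to $S$: with probability $\geq 1 - \exp(-m)$ the number of colors used by $S$ satisfies $\cardin{\chi_i(S)} \geq m/e^2$. Condition on this. A vertex of $T$ fails to be a neighbor of $S$ precisely when its color lands outside the ``band'' of colors lying within $1$ (mod $\nc$) of $\chi_i(S)$; translating $\chi_i(S)$ by $+1$ shows this band has size at least $\cardin{\chi_i(S)} \geq m/e^2$, so (using $\nc \leq 2\n$ and $m \geq (\eps/16)\n$) each vertex of $T$ misses the band with probability at most $1 - c_0\eps$ for an absolute constant $c_0>0$. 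The colors of the vertices of $T$ are independent of one another and of $\chi_i(S)$, so all of $T$ misses the band with probability at most $\pth{1 - c_0\eps}^{m}$. Altogether the probability of no $S$--$T$ edge in $\chi_i$ is at most $\exp(-m) + \pth{1 - c_0\eps}^{m} \leq \exp\pth{-\eps^2 \n / c}$ for an absolute constant $c$, once $\n \geq \cExpSize/\eps^2$ with $\cExpSize$ large enough that $m \geq \floor{(\eps/8)\n}$ absorbs the constant factors (recall $m \geq \cExpSize/8 - 1$).

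Finally I would assemble the bound. The $M$ colorings are independent, so a fixed disjoint pair has no $S$--$T$ edge in $\G$ with probability at most $\exp\pth{-\eps^2 \n M / c} \leq \exp\pth{-\cExpRep \n / c}$, using $M \geq \cExpRep/\eps^2$. There are at most $\binom{\n}{m}^2 \leq 4^{\n}$ such pairs, so by the union bound the failure probability is at most $4^{\n}\exp\pth{-\cExpRep\n/c} = \exp\pth{\n\pth{\ln 4 - \cExpRep/c}}$, which is below $\exp(-4\n)$ once $\cExpRep$ is a sufficiently large constant. (The constant $\cExpSize$ is fixed first, so that $m$ is large enough for the single-coloring estimate, and then $\cExpRep$ is chosen in terms of it.)

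The main obstacle is exactly that single-coloring estimate: one must be certain that a set $S$ of size only $\Theta(\eps\n)$ --- hence $\Theta(\eps\nc)$ --- nevertheless spreads over $\Omega(\eps\n)$ distinct colors, so that the forbidden band around $\chi_i(S)$ occupies an $\Omega(\eps)$-fraction of all $\nc$ colors; this anti-concentration is precisely what \lemref{first} supplies, and without it $S$ could in principle pile onto a handful of colors and fail to reach $T$. Everything downstream --- the conditional independence of $T$'s colors, the $M$-fold independence of the colorings, and the crude $\binom{\n}{m}^2 \leq 4^\n$ count of pairs --- is routine, and the floors together with the harmless factor lost in passing to disjoint pairs are swallowed into the choice of $\cExpSize$ and $\cExpRep$.
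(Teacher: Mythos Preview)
Your proposal is correct and follows essentially the same approach as the paper: fix a pair $(S,T)$, use \lemref{first} to show one side occupies $\Omega(\eps\n)$ colors, bound the chance the other side misses the adjacent band by $\exp(-\Omega(\eps^2\n))$ per coloring, multiply over the $M$ independent colorings, and union bound over $\leq 4^{\n}$ pairs. The only cosmetic differences are that the paper applies \lemref{first} to $T$ rather than $S$ (symmetric), and it union bounds over all $S$ with $\cardin{S}\geq(\eps/4)\n$ directly rather than first trimming $S$ to a fixed size $m$; your reduction to equal-size disjoint pairs is slightly cleaner but not materially different.
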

\begin{proof}
    Fix the set $S$, and a ``bad'' set $T$ (disjoint from $S$) of size
    $(\eps/4) \n$. Here, the bad event is that $S$ is not connected
    to $T$ in $\G$. For some $s \in S$, the probability that $s$ is
    not adjacent to any vertex $t \in T$, by an edge induced by a
    specific coloring $\chi_i$ is
    \begin{equation*}
        \Prob{\nbrY{i}{s} \cap T = \varnothing} \leq%
        1 - \frac{|\chi_i(T)|}{\nc}.        
    \end{equation*}
    Conceptually, we first color the elements of $T$, and then color
    the elements of $S$.  The case $\cardin{S} > (1-\eps/4) \n$ is not
    possible as $S$ and $T$ are disjoint. Using the
    independence of the neighborhood of each vertex $s \in S$, for
    each coloring $i$, we have that
    $\Prob{\nbrY{i}{S} \cap T = \varnothing}$ is
    \begin{equation*}
        \prod_{s \in S} \Prob{\nbrY{i}{s} \cap T = \varnothing}
        \leq%
        \biggl({1 - \frac{\abs{\chi_i(T)}}{\nc}}\biggr)^{\cardin{S}}%
        \leq
        \exp \Bigl( - \frac{\eps \n}{4 \nc }  |\chi_i(T)| \Bigr)
        \leq
        \exp \Bigl( - \frac{  \eps}{8  } |\chi_i(T)| \Bigr).
    \end{equation*}
    A coloring $\chi_i$ is \emphw{good} if
    \begin{math}
        |\chi_i(T)| \geq {\cardin{T}}/{e^2} \geq \tau = \eps \n/40.
    \end{math}
    By \lemref{first}, we have that
    \begin{equation*}        
        \Prob{\chi_i \text{ is bad}\bigr.}
        =%
        \Prob{\cardin{\chi_i(T)} < \tau\bigr.}%
        \leq%
        \exp\pth{-|T|}
        =%
        \exp\Bigl( -\frac{\eps}{4} \n \Bigr). 
    \end{equation*}
    Thus, we have
    \begin{align*}
      \beta_i
      &=
        \Prob{\nbrY{i}{S} \cap T = \varnothing}
        \leq%
        \ProbCond{\nbrY{i}{S} \cap T = \varnothing}{
        \text{$\chi_i$ is good}}
        +
        \Prob{\chi_i \text{ is bad}\bigr.}
      \\
      &\leq%
        \exp \Bigl( - \frac{  \eps}{8  } \cdot
        \frac{\eps \n}{40} \Bigr)        
        +
        \exp\Bigl( -\frac{\eps}{4} \n \Bigr)
        \leq%
        2\exp \Bigl( - \frac{  \eps^2}{320  } \nu  \Bigr)        
        \leq%
        \exp \Bigl( - \frac{  \eps^2}{640  } \nu  \Bigr).
    \end{align*}
    for $\n > 640/\eps^2$.  As the colorings $\chi_1, \dots, \chi_M$
    are chosen independently, we have that
    \begin{equation*}
        \beta%
        =%
        \Prob{\nbrX{S} \cap T = \varnothing}%
        =%
        {\textstyle\prod}_{i = 1}^{M}\beta_i
        \leq%
        \exp \pth{- M\eps^2 \n/640}.
    \end{equation*}
    There are
    $\binom{\n}{\geq \eps \n/4} \binom{\n}{\eps \n/4} \leq 4^\n$
    choices for the sets $S$ and $T$. Thus, using the union bound over
    all of the choices for these sets, we have that
    \begin{equation*}
        \Prob{\exists S,T : \nbrX{S} \cap T = \varnothing} \leq%
        \sum_{S,T} \Prob{\nbrX{S} \cap T = \varnothing}%
        \leq%
        4^n \beta%
        \leq%
        \exp \biggl(2\n - \frac{M\eps^2 \n}{640}\biggr)%
        \leq%
        \exp \pth{-4\n},        
    \end{equation*}
    if $M \geq 2600/\eps^2$.
\end{proof}

\section{The construction of the safely connected %
   subgraph}

\subsection{Preliminaries}

For a set $\Disks$ of disks, and a parameter $k$, let
$\GIY{\leq k}{\Disks}$ be the subgraph of the intersection graph,
where two disks $\disk_1, \disk_2 \in \Disks$ are connected by an
edge, if there exists a point $\pa$, such that
$\pa \in \disk_1 \cap\disk_2$, and the depth of $\pa$ in $\Disks$ at
most $k$.  Such an edge is \emphi{$k$-shallow} in $\Disks$.
Similarly, let $\ArrY{\leq k}{\Disks}$ be the arrangement formed by
keeping only vertices, edges and faces of the arrangement
$\ArrX{\Disks}$ that are of depth at most $k$ (i.e., each connected
region of all points of deeper depth form a ``hole'' face in this
arrangement).

\subsubsection{Computing the shallow parts of the intersection graph}

The following is well known \cite{cs-arscg-89,by-ag-98}
-- for the sake of completeness, we provide a proof.

\begin{lemma}
    \lemlab{face_comb_complexity}%
    Let $\Disks$ be a set of $n$ disks, and let $k$ be a parameter.
    We have that the combinatorial complexity of
    $\ArrY{\leq k}{\Disks}$ is $O(nk)$, and this also bounds
    $\cardin[1]{\EdgesX{\GIY{\leq k}{\Disks}}}$.  Both the
    arrangement $\ArrY{\leq k}{\Disks}$ and the graph
    $\GIY{\leq k}{\Disks}$ can be computed in $O(n \log n + nk)$
    (expected) time.
\end{lemma}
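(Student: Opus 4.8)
The plan is to handle the combinatorial bound and the running time separately, and to lean on the Clarkson--Shor random-sampling technique for both. For the complexity of $\ArrY{\le k}{\Disks}$: each vertex of this arrangement is the intersection point of two circles covered by at most $k$ disks --- a configuration defined by two objects, whose conflict set is the set of disks containing it. A uniform random sample $\R \subseteq \Disks$ of size $r = \ceil{n/k}$ has its depth-$0$ arrangement (the boundary of $\bigcup \R$) of complexity $O(r)$, since disks are pseudo-disks and the union of $m$ pseudo-disks has complexity $O(m)$; the only property used here is that the family has near-linear union complexity. Plugging this into the Clarkson--Shor bound for the $(\le k)$-level (each vertex defined by two objects) then yields that $\ArrY{\le k}{\Disks}$ has $O\bigl((k+1)^2\cdot n/(k+1)\bigr) = O(nk)$ vertices, hence --- being a planar subdivision --- $O(nk)$ edges and faces by Euler's formula.

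For $\cardin{\EdgesX{\GIY{\le k}{\Disks}}} = O(nk)$ I would again use random sampling, on the set of intersecting pairs of disks. Call a pair $\disk_1\disk_2$ \emph{exposed} with respect to a subset $\R$ if $\disk_1, \disk_2 \in \R$ and the lens $\disk_1 \cap \disk_2$ contains a point lying in no other disk of $\R$; the ``depth'' of the pair in $\R$ is the minimum, over points $\pa$ in the lens, of the number of disks of $\R \setminus\{\disk_1,\disk_2\}$ containing $\pa$, so ``exposed'' means depth $0$, and ``depth $\le k-2$ in $\Disks$'' is exactly ``$k$-shallow''. The key estimate is that a random sample of size $r$ has only $O(r)$ exposed pairs: its depth-exactly-$2$ region is the set difference of its depth-$\le 2$ zone and its depth-$\le 1$ zone, each of complexity $O(r)$ (again by the near-linear union complexity), every connected component of it lies inside a single lens, and distinct exposed pairs give distinct components. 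The Clarkson--Shor amplification from this sample size then gives $O(nk)$ $k$-shallow edges. (A more hands-on alternative is to charge each $k$-shallow edge to the topmost point of the shallow part of its lens --- which is either a vertex of $\ArrY{\le k+1}{\Disks}$ or a $y$-extreme point of $\disk_1$ or $\disk_2$ --- but a single moderately deep feature can then receive up to $\Theta(k^2)$ charges, so the sampling formulation is cleaner.)

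For the running time, I would compute a shallow cutting of $\Disks$ of size $O(n/k)$: a collection of $O(n/k)$ cells whose union covers the depth-$\le k$ zone, each cell meeting only $O(k)$ disks and lying entirely at depth $O(k)$, computable in $O(n \log n)$ time by standard shallow-cutting machinery \cite{cs-arscg-89, by-ag-98}. Within a cell, the arrangement of the $O(k)$ disks meeting it is computed by brute force in $O(k^2)$ time and clipped to the cell and to depth $\le k$ (the offset from disks that contain the whole cell is a single count stored with the cell); stitching the pieces yields $\ArrY{\le k}{\Disks}$ in total time $\sum_{\text{cells}} O(k^2) = O((n/k)k^2) = O(nk)$. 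The $k$-shallow edges are read off in the same sweep: for each cell, inspect all $O(k^2)$ pairs among the disks meeting it and output those intersecting inside the cell at depth $\le k$, deduplicating with a dictionary keyed by the unordered pair. Every $k$-shallow edge has a shallow witness in some cell that meets both of its disks, so it is found; and since the total number of disk--cell incidences is $O(n)$, the number of pairs inspected is $\sum_{\text{cells}} O(k^2) = O(nk)$, giving the claimed $O(n \log n + nk)$ expected running time.

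The main obstacle is the edge count $\cardin{\EdgesX{\GIY{\le k}{\Disks}}} = O(nk)$. Naive charging of a $k$-shallow pair to a feature of the shallow arrangement loses a factor of $k$ (even $k^2$), since one moderately deep feature can witness many pairs; and the sampling argument needs care because the set of disks that ``bury'' a given lens is not a fixed set, so one is applying the monotone form of the Clarkson--Shor $(\le k)$-level bound rather than its textbook version. Both routes ultimately rest on the linear complexity of the depth-$1$ and depth-$2$ zones --- precisely the near-linear union complexity the paper highlights. A secondary point is extracting the $\Theta(nk)$ edges without an extra logarithmic or $k$ factor, which is why the shallow-cutting route charges the work to the $O(n)$ total disk--cell incidences rather than to $\sum_{f} \binom{\depthX{f}}{2}$ over the faces $f$ of $\ArrY{\le k}{\Disks}$, a quantity that can be as large as $\Theta(nk^2)$.
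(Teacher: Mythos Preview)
Your proposal is correct and follows essentially the same route as the paper: Clarkson--Shor applied via the linear union complexity of disks to bound both the $(\le k)$-level arrangement and the number of $k$-shallow edges, followed by a shallow-cutting decomposition into $O(n/k)$ cells with $O(k)$-size conflict lists for the $O(n\log n + nk)$ algorithm. The only minor differences are cosmetic --- the paper uses Bernoulli sampling with probability $1/k$ and fixes a single witness point per edge (which sidesteps the monotone-vs.-fixed conflict-set subtlety you discuss), and it cites \cite{ct-oda23-16} rather than \cite{cs-arscg-89,by-ag-98} for the $O(n\log n)$ shallow-cutting construction.
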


\begin{proof}
    The first part is a well known consequence of the Clarkson-Shor
    technique \cite{cs-arscg-89}, as the union complexity of $n$ disks
    is linear. The construction algorithm for the arrangement is
    described by Boissonnat and Yvinec \cite{by-ag-98}.
	
    The second part (which is also known) follows by applying the
    Clarkson-Shor technique. Indeed, every face/vertex/edge of depth
    at most two of $\ArrX{\Disks}$, can contribute one edge to
    $\GIY{\leq 2}{\Disks}$. As such, the number of such edges is
    $O(n)$ as the complexity of $\ArrY{\leq 2}{\Disks}$ is $O(n)$.
	
    Let $\EG = \EGX{\GIY{\leq 2}{\Disks}}$.  Consider an edge
    $\edge = \disk_1, \disk_2 \in \EdgeSet$, with a point
    $\pa \in \edge$ being the witness point of depth at most $k$ such
    that $\pa \in \disk_1 \cap \disk_2$.  Let $R$ be a random sample
    of disks from $\Disks$, where each disk is sampled with
    probability $\alpha = 1/k$. The probability of this edge to appear
    in $\GIY{\leq 2}{R}$ is
    \begin{equation*}
        \Prob{\edge \in \EGX{\GIY{\leq 2}{R}}}
        \geq%
        (1/k)^2 (1-1/k)^{k-2}%
        \geq%
        1/(10k^2),        
    \end{equation*}
    Indeed, this is the probability of picking $\disk_1,\disk_2$ to
    the sample, and no other disks (of the at most $k-2$ disks) that
    covers $\pa$.
	
    The complexity of $\ArrY{\leq 2}{R}$ is bounded by $O(|R|)$, and
    as $\Ex{|R|} = O(n/k)$, it follows that
    \begin{equation*}
        \sum_{\edge \in \EdgeSet} \Prob{ \edge \in \EdgesX{\GIY{\leq
                 2}{R}}}
        \leq%
        O\pth{ \Ex{|{\ArrY{\leq 2}{R}} | \bigr.}}
        =%
        O\pth{ \Ex{|R | \bigr.}}
        =%
        O(n/k).
    \end{equation*}
    As for the other direction, we have
    \begin{equation*}
        \sum_{\edge \in \EdgeSet} \Prob{ \edge \in \EdgesX{\GIY{\leq
                 2}{R}}}
        \geq%
        \frac{\cardin{\EdgeSet}}{10 k^2}.
    \end{equation*}
    Combining the above two, we have
    $\cardin{\EdgeSet} / k^2 = O(n/k)$, which implies
    $\cardin{\EdgeSet} = O(nk)$.

    Having the arrangement $\ArrY{\leq k}{\Disks}$ is not by itself
    sufficient to compute efficiently the graph
    $\GIY{\leq k}{\Disks}$. Instead, one can lift the disks to planes,
    and use $n/k$-shallow cuttings \cite{ct-oda23-16}. This results in
    a decomposition of the plane into $O(n/k)$ cells, such that each
    cell has a conflict-list of size $O(k)$. We compute the
    arrangement of the disks in the conflict list, and by tracing the
    boundary of each disk, it is straightforward to discover all the
    edges of $\GIY{\leq k}{\Disks}$ that arise out of points in this
    cell. This takes $O(k^2)$ time per cell, and $O(n k + n \log n)$
    time overall, since computing the shallow cuttings takes
    $O(n \log n)$ time. This also provides an alternative algorithm
    for computing $\ArrY{\leq k}{\Disks}$.
\end{proof}

\subsubsection{The bipartite case}

Analogous to the previous section, for two sets of disks
$\Disks_1, \Disks_2$ and a parameter $k$, we let
$\GIY{\leq k}{\Disks_1, \Disks_2}$ be the intersection graph defined
as before, but where edges are only present between disks
$\disk_1, \disk_2$ such that $\disk_1 \in \Disks_1$ and
$\disk_2 \in \Disks_2$. Using the preceding lemma and this definition,
the following corollary is immediate.

\begin{corollary}
    \corlab{bipartite_comb_complexity}%
    Let $\DSX{1}$ and $\DSX{2}$ be two disjoint sets of disks of total
    size $n$.  We have that the number of edges of
    \begin{math}
        \G_{\leq k} (\DSX{1}, \DSX{2})
    \end{math}
    is bounded by $O\pth{ n k}$.  Additionally, the edges of this
    graph can be computed, in $O( n k )$ time, for
    $k = \Omega(\log n)$.
\end{corollary}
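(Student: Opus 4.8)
The plan is to reduce to \lemref{face_comb_complexity} by treating the two families as one. Set $\Disks = \DSX{1} \sqcup \DSX{2}$, so $\cardin{\Disks} = n$, and note that, by the definition of the bipartite graph (``defined as before''), the depth of a witness point is still measured with respect to all of $\Disks$. Hence an edge $\disk_1 \disk_2$ with $\disk_1 \in \DSX{1}$ and $\disk_2 \in \DSX{2}$ lies in $\GIY{\leq k}{\DSX{1}, \DSX{2}}$ exactly when it is a $k$-shallow edge of $\Disks$ that happens to be bipartite. Consequently $\GIY{\leq k}{\DSX{1}, \DSX{2}}$ is a subgraph of $\GIY{\leq k}{\Disks}$, and \lemref{face_comb_complexity} immediately yields $\cardin{\EGX{\GIY{\leq k}{\DSX{1}, \DSX{2}}}} \leq \cardin{\EGX{\GIY{\leq k}{\Disks}}} = O(nk)$.

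For the construction, I would run the algorithm of \lemref{face_comb_complexity} on $\Disks$ to produce all edges of $\GIY{\leq k}{\Disks}$ in $O(n \log n + nk)$ expected time, and then scan this edge list once, discarding every edge with both endpoints in $\DSX{1}$ or both endpoints in $\DSX{2}$. The filtering step costs $O(nk)$, since that is the size of the list. Because we assume $k = \Omega(\log n)$, the term $n \log n$ is dominated by $nk$, so the total running time is $O(nk)$, as claimed.

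There is essentially no obstacle here: the statement is a direct corollary of the preceding lemma. The only points requiring care are bookkeeping ones — verifying that the shallowness parameter for the combined arrangement coincides with the one in the bipartite definition (it does, since depth remains relative to $\DSX{1} \cup \DSX{2}$), and observing that the $k = \Omega(\log n)$ hypothesis is exactly what absorbs the $O(n \log n)$ preprocessing term into $O(nk)$.
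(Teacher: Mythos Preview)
Your argument is correct and is exactly the ``immediate'' reduction the paper has in mind: apply \lemref{face_comb_complexity} to $\Disks = \DSX{1} \cup \DSX{2}$, observe that $\GIY{\leq k}{\DSX{1},\DSX{2}}$ is the bipartite subgraph of $\GIY{\leq k}{\Disks}$, and use $k = \Omega(\log n)$ to absorb the $O(n\log n)$ term. There is nothing to add.
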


\subsection{The construction algorithm}
\seclab{construction}

The input is a set of $n$ disks $\Disks$, and a parameter
$\eps \in (0,1)$, where $\cExpSize, \cExpRep$ are the constants from
\lemref{eps_set_expansion}, and $\constA$ is a constant to be
specified later.  Initially, the algorithm starts with the empty graph
over $\Disks$. Let
\begin{equation}
    \alpha = \ceil{\constA \cExpSize \pth{\eps^{-2} + 4 \ln n}}.
    \eqlab{threshold}
\end{equation}
Next, using \lemref{face_comb_complexity}, the algorithm computes all
the faces of depth $\leq \alpha$ in $\ArrX{\Disks}$, and adds all the
edges induced by $\alpha$-shallow intersections of the
\lemref{face_comb_complexity} to the graph.

This takes care of all the shallow faces of the arrangement. For
deeper faces, in the $i$\th round, for
$i=1, \ldots, 1+ \ceil{\log_2 (n/\alpha) }$, the algorithm sets
$\alpha_i = 2^{i-1}\alpha$. The algorithm handles the faces with depth
in the range $(\alpha_{i-1}, \alpha_i]$, as follows:

\begin{description}
    \item \hypertarget{hook:i:round}{} \textbf{$i$\th round}:
    For $j = 1, \ldots, M = \ceil{\cExpRep/\eps^2}$, the algorithm colors
    each disk of $\Disks$ uniformly at random with $\alpha_i$
    colors. Let $\chi_{i,j} : \Disks \to \IRX{\alpha_i}$ be this
    coloring. Let $\DSX{t} = \chi_{i,j}^{-1} (t)$ be the set of disks
    of $\Disks$ colored by color $t$, for $t \in \IRX{\alpha_i}$. Using 
    \corref{bipartite_comb_complexity} as a subroutine, the algorithm computes the edges of
    $\EGX{ \G_{\leq \alpha} (\DSX{t-1}, \DSX{t})}$ and adds them to
    the resulting graph, for $t=1, \ldots, \alpha_i$ (where
    $\DSX{0}=\DSX{\alpha_i}$).
\end{description}
Namely, the algorithm computes $O( \log^2 n)$ random colorings in each
round, and adds the shallow edges between consecutively colored disks,
for each such coloring to the graph.

The final graph is denoted by $\spanner = \spX{\Disks}$.

\subsection{Analysis}

\subsubsection{Construction time and size}%

\begin{lemma}
    \lemlab{runtime}%
    The construction algorithm runs in time
    $O( n \eps^{-4} \log^2 n)$. This also bounds the number of edges
    in the computed graph.
\end{lemma}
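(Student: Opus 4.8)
The plan is to account for the work (and edge count) separately for the initial shallow-face phase and for the $1+\ceil{\log_2(n/\alpha)}$ subsequent rounds, and then sum. First I would handle the preprocessing step: computing all faces of depth $\leq \alpha$ in $\ArrX{\Disks}$ and the $\alpha$-shallow edges. By \lemref{face_comb_complexity} this takes $O(n\log n + n\alpha)$ time and contributes $O(n\alpha)$ edges. Since $\alpha = \Theta(\eps^{-2} + \ln n)$ by \Eqref{threshold}, this is $O(n\log n + n\eps^{-2} + n\log n) = O(n\eps^{-2}\log n)$, which is dominated by the final bound.

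Next I would analyze a single coloring inside the $i$\th round. For a fixed coloring $\chi_{i,j}$, the algorithm invokes \corref{bipartite_comb_complexity} on each consecutive color pair $(\DSX{t-1},\DSX{t})$ to compute the $\alpha$-shallow bipartite edges, for $t = 1,\ldots,\alpha_i$. Here I must be slightly careful: \corref{bipartite_comb_complexity} bounds the number of edges of $\G_{\leq k}(\DSX{1},\DSX{2})$ by $O(n_t k)$ where $n_t = |\DSX{t-1}| + |\DSX{t}|$ is the total size of the two color classes involved, and bounds the running time similarly (for $k = \Omega(\log n)$; note $k = \alpha$ satisfies this). Summing over $t = 1,\ldots,\alpha_i$, each disk of $\Disks$ lies in exactly one color class and hence participates in at most two consecutive pairs, so $\sum_t n_t \leq 2n$. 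Therefore a single coloring contributes $O(n\alpha)$ edges and takes $O(n\alpha)$ time.

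Then I would multiply by the number of colorings. In each round there are $M = \ceil{\cExpRep/\eps^2} = O(\eps^{-2})$ colorings, and there are $1 + \ceil{\log_2(n/\alpha)} = O(\log n)$ rounds, so the total over all rounds is $O(\log n \cdot \eps^{-2} \cdot n\alpha)$ time and edges. Substituting $\alpha = O(\eps^{-2} + \log n) = O(\eps^{-2}\log n)$ gives $O(n \eps^{-4} \log^2 n)$, absorbing the preprocessing term. Since every edge added is charged to one of these computations, the same quantity bounds the number of edges in $\spanner$. I expect the only mild subtlety — not really an obstacle — to be the bookkeeping that guarantees $\sum_t n_t = O(n)$ per coloring (so that the per-coloring cost is $O(n\alpha)$ rather than $O(\alpha \cdot n\alpha)$), and checking that the parameter $k = \alpha$ passed to \corref{bipartite_comb_complexity} indeed meets its $k = \Omega(\log n)$ requirement, which holds because $\alpha \geq 4\constA\cExpSize \ln n$.
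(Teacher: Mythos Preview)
Your proposal is correct and follows essentially the same approach as the paper: bound the preprocessing by $O(n\alpha)$ via \lemref{face_comb_complexity}, bound each coloring by $O(n\alpha)$ via \corref{bipartite_comb_complexity} using the observation that $\sum_t (|\DSX{t-1}|+|\DSX{t}|) \leq 2n$, and multiply by the $O(\eps^{-2})$ colorings per round and $O(\log n)$ rounds. Your explicit check that $\alpha = \Omega(\log n)$ satisfies the hypothesis of \corref{bipartite_comb_complexity} is a nice bit of care that the paper leaves implicit.
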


\begin{proof}
    At the start, the algorithm includes all edges from faces with
    depth $\leq \alpha$. By \lemref{face_comb_complexity}, this can be
    done in $O(n \alpha) = O(n/\eps^2 + n \log n)$ time.
    
    For inner iteration $j$, the disk coloring step can be performed
    in $O(n)$ time by simply randomly assigning each disk a color from
    $\IRX{\alpha_i}$. Let $n_{t} = \abs{\DSX{t}}$, where $\DSX{t} \subseteq \Disks$
    is the set of disks assigned color $t$.  The algorithm computes
    edges induced by $\alpha$-shallow faces in
    $\Arr(\DSX{t-1} \cup \DSX{t} )$ for $t = 1, \dots, \alpha_i$.  By
    \corref{bipartite_comb_complexity}, these edges can be computed in
    $O(\alpha (n_{t} + n_{t+1}))$ time. Summing over $t$ (for a fixed
    $j$), we have that edges of this iteration can be computed in
    \[
        \sum_{t = 1}^{\alpha_i} O(\alpha (n_{t} + n_{t+1})) =%
        O(\alpha n)%
        =%
        O(n/\eps^2 + n \log n),
    \]
    time, as $\sum_{t = 1}^{\alpha_i} n_t = n$.  This is being
    performed $O( 1/\eps^2 )$ times in each round, and there are
    $O(\log n)$ rounds. Thus, the total work of this algorithm is
    $O( (n\eps^{-2} + n \log n) \eps^{-2} \log n) = O( n \eps^{-4}
    \log^2 n)$.
\end{proof}

\subsubsection{Rejecting edges from deep faces}

One issue that may arise is ignoring
faces which are too deep in the execution of the construction
algorithm. This can happen if there is some face in the
arrangement $\ArrX{\Disks}$ with depth in the range
$(\alpha_{i-1}, \alpha_i]$ that, in the $i$\th round, has more than
$\alpha$ disks intersecting it from a given color pair. So even under
the random coloring, the face still has depth which is too large under
some color pair, and some of its induced edges are ignored. Thus, we
must upper bound the probability of any failure of this type for any
color pair $(t-1,t)$, and any coloring $\chi_{i,j}$ sampled in the
$i$\th round.

\begin{lemma}
    \lemlab{faces_are_not_too_deep} Consider some face
    \(\face \in \ArrX{\Disks}\) such that
    $\depthX{\face} \in (\alpha_{i-1}, \alpha_i]$. In the $j$\th
    coloring of the $i$\th round of the algorithm (see
    \secref{construction}), for any fixed color $t \in \IRX{\alpha_i}$,
    the probability that \(\face\) is a hole (i.e., has depth bigger
    than $\alpha$) in $\ArrY{\leq \alpha}{\DSX{t-1}, \DSX{t}}$ is
    bounded by $1/n^{O(1)}$.
\end{lemma}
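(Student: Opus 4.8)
The plan is to recognize the depth of $\face$ in the two‑colored arrangement as a binomial random variable with a small constant mean, and then bound its upper tail with a Chernoff inequality.

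Fix the round $i$, the coloring $\chi_{i,j}$, and the color $t \in \IRX{\alpha_i}$. By definition, $\face$ is a hole in $\ArrY{\leq \alpha}{\DSX{t-1}, \DSX{t}}$ exactly when the number of disks of $\DSX{t-1} \cup \DSX{t}$ covering $\face$ exceeds $\alpha$ (recall that within a color pair the shallowness threshold used by the algorithm is $\alpha$, not $\alpha_i$). Let $d = \depthX{\face}$, and let $\disk_1, \dots, \disk_d$ be the disks of $\Disks$ that cover $\face$; this set is determined by the deterministic arrangement $\ArrX{\Disks}$ and is independent of the random coloring. Since $\alpha_i \geq \alpha \geq 2$, the colors $t-1$ and $t$ are distinct, and each $\disk_\ell$ receives one of them --- independently of the others --- with probability exactly $2/\alpha_i$. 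Hence $Z = \depthY{\face}{\DSX{t-1} \cup \DSX{t}}$ is a sum of $d$ independent indicator variables, so $Z$ is binomial with parameters $d$ and $2/\alpha_i$, and
\[
    \Ex{Z} = \frac{2 d}{\alpha_i} \leq 2 ,
\]
because $d = \depthX{\face} \leq \alpha_i$ by hypothesis.

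Next I invoke the Chernoff bound in the form $\Prob{Z \geq a} \leq \pth{e\,\Ex{Z}/a}^{a}$, valid for every $a > \Ex{Z}$. Recall from \Eqref{threshold} that $\alpha = \ceil{\constA \cExpSize(\eps^{-2} + 4 \ln n)} \geq 4 \constA \cExpSize \ln n$, so $\alpha$ exceeds any prescribed constant once $\constA$ (or $\cExpSize$) is taken large enough; in particular assume $\alpha \geq 2 e^{2}$. Then $a = \alpha$ satisfies $a \geq 2 e^{2} \geq e\,\Ex{Z}$, and
\[
    \Prob{\face \text{ is a hole}} = \Prob{Z > \alpha} \leq \Prob{Z \geq \alpha}
    \leq \pth{\frac{2 e}{\alpha}}^{\alpha} \leq e^{-\alpha} \leq e^{-4 \constA \cExpSize \ln n} = n^{-4 \constA \cExpSize} .
\]
Choosing $\constA$ sufficiently large makes the exponent as large as we wish, which is exactly the $1/n^{O(1)}$ bound claimed.

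I do not anticipate a genuine obstacle here. The only step that takes a moment of care is identifying the event ``$\face$ is a hole in $\ArrY{\leq \alpha}{\DSX{t-1}, \DSX{t}}$'' with the tail event $\{Z > \alpha\}$, where $Z$ counts only the disks covering $\face$ that land in $\DSX{t-1} \cup \DSX{t}$; once that translation is pinned down, the rest is a routine Chernoff estimate, and the logarithmic slack built into $\alpha$ (the $\constA \cExpSize \ln n$ term) absorbs everything with room to spare --- enough, in fact, to survive the later union bound over all faces, colors $t$, colorings $j$, and rounds $i$.
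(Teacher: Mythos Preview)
Your proof is correct and essentially identical to the paper's: both identify the depth of $\face$ under the two colors as a sum of independent indicators with mean at most $2$, bound the tail by $(2e/\alpha)^\alpha \leq e^{-\alpha}$, and then use the $\ln n$ term in $\alpha$ to get $1/n^{O(1)}$. The only cosmetic difference is that the paper writes out the union-bound-over-subsets estimate $\binom{d}{\alpha}(2/\alpha_i)^\alpha \leq (2e/\alpha)^\alpha$ explicitly, whereas you invoke the same inequality under the name ``Chernoff bound.''
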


\begin{proof}
    The face $\face$ is only a hole in
    $\ArrY{\leq \alpha}{\DSX{t-1}, \DSX{t}}$ during the $i$\th round
    only if $\depthX{f, \DSX{t-1} \cup \DSX{t}} > \alpha$.  Since each
    disk is colored uniformly at random with $\alpha_i$ colors, the
    probability that some disk incident to $\face$ has color $t-1$ or
    $t$ is $2 / \alpha_i$. Since there are
    $\depthX{\face} \leq \alpha_i$ disks incident to $\face$, we can
    bound the probability of this event (taken over the choice of
    coloring $\chi_{i,j}$) by
    \begin{equation*}
        \Prob{\depthX{f, \DSX{t-1} \cup \DSX{t}} > \alpha}
        \leq%
        \binom{\depthX{\face}}{\alpha} \pth{\frac{2}{\alpha_i}}^{\alpha}
        \leq%
        \pth{\frac{\alpha_i e}{\alpha}}^\alpha
        \pth{\frac{2}{\alpha_i}}^{\alpha}
        =%
        \pth{\frac{2e}{\alpha}}^{\alpha}
        \leq%
        \exp( -\alpha )
        \leq%
        \frac{1}{n^{O(1)}}.
    \end{equation*}
    as 
    \begin{math}
        \pth{\frac{n}{k}}^k \leq \binom{n}{k} \leq \pth{\frac{e
              n}{k}}^k,
    \end{math}
    and by making
    $\constA$ sufficiently large.
\end{proof}

Now, we apply this lemma with a union bound to show that, with high probability,
we never ignore any edges needed for our construction due to having a face 
colored with too much of a single color.

\begin{corollary}
    \corlab{no_faces_ever_ignored}
    With probability $\geq 1 - n^{-7}$, no faces are ever ignored during the
    iteration they are handled in the spanner construction from \secref{construction}.
\end{corollary}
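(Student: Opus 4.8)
The plan is to apply \lemref{faces_are_not_too_deep} together with a union bound over all the faces of the arrangement, all the color pairs, and all the colorings used across all rounds of the construction. First I would observe that the construction handles faces of depth in the range $(\alpha_{i-1}, \alpha_i]$ during the $i$\th round, and a face in that range is ``ignored'' only if the bad event of \lemref{faces_are_not_too_deep} occurs for it --- namely, it becomes a hole (depth exceeding $\alpha$) in $\ArrY{\leq \alpha}{\DSX{t-1}, \DSX{t}}$ for some color pair $(t-1,t)$ in some coloring $\chi_{i,j}$ of that round. By \lemref{faces_are_not_too_deep}, for a fixed face, a fixed color $t$, and a fixed coloring, this probability is at most $1/n^{c}$ for a constant $c$ that can be made as large as we like by taking $\constA$ large enough.

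Next I would count how many (face, color, coloring) triples there are. The arrangement $\ArrX{\Disks}$ has $O(n^2)$ faces. In the $i$\th round there are $\alpha_i \le n$ colors and $M = O(\eps^{-2}\log n) = n^{O(1)}$ colorings, and there are $O(\log n)$ rounds; so the total number of triples over the whole algorithm is bounded by $n^{O(1)}$. Applying the union bound, the probability that \emph{any} face is ignored during the round in which it is handled is at most $n^{O(1)}/n^{c}$. Choosing $\constA$ large enough that $c$ exceeds the exponent in the $n^{O(1)}$ factor by at least $7$ gives the claimed bound $1 - n^{-7}$.

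The only subtlety --- and the one place I would be slightly careful --- is making sure the constant $c$ in \lemref{faces_are_not_too_deep} really can be driven above the combined polynomial exponent coming from the number of faces ($O(n^2)$), the number of colors, the number of colorings, and the number of rounds. This is not an obstacle so much as a bookkeeping point: in \lemref{faces_are_not_too_deep} the failure probability is $\exp(-\alpha)$ with $\alpha = \ceil{\constA \cExpSize(\eps^{-2} + 4\ln n)} \ge 4\constA\cExpSize \ln n$, so $\exp(-\alpha) \le n^{-4\constA \cExpSize}$, and the exponent scales linearly in $\constA$. Hence it suffices to pick $\constA$ large enough that $4\constA\cExpSize$ exceeds the total polynomial exponent by $7$, and the corollary follows. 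I would phrase the proof in two short paragraphs: one setting up the union bound and counting the triples, and one invoking \lemref{faces_are_not_too_deep} and fixing $\constA$.

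\begin{proof}
    During the $i$\th round of the construction (see \secref{construction}),
    the algorithm handles exactly the faces of $\ArrX{\Disks}$ whose depth
    lies in $(\alpha_{i-1}, \alpha_i]$. Such a face $\face$ is ignored during
    that round only if, for some coloring $\chi_{i,j}$ with
    $j \in \IRX{M}$ and some color $t \in \IRX{\alpha_i}$, the face $\face$
    is a hole in $\ArrY{\leq \alpha}{\DSX{t-1}, \DSX{t}}$. The number of
    faces of $\ArrX{\Disks}$ is $O(n^2)$; in each round there are
    $\alpha_i \le n$ colors and $M = \ceil{\cExpRep/\eps^2} = n^{O(1)}$
    colorings; and there are $1 + \ceil{\log_2(n/\alpha)} = O(\log n)$
    rounds. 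Hence the number of (face, coloring, color) triples that could
    witness such a failure is at most $n^{c'}$ for some constant $c'$.

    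By \lemref{faces_are_not_too_deep}, each such triple fails with
    probability at most $\exp(-\alpha)$. Since
    $\alpha \geq 4 \constA \cExpSize \ln n$ by \Eqref{threshold}, we have
    $\exp(-\alpha) \leq n^{-4 \constA \cExpSize}$. Taking $\constA$ large
    enough that $4\constA \cExpSize \geq c' + 7$ and applying the union
    bound over all the triples, the probability that some face is ignored
    during the round in which it is handled is at most
    $n^{c'} \cdot n^{-4\constA\cExpSize} \leq n^{-7}$.
\end{proof}
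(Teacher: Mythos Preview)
Your proof is correct and follows exactly the paper's approach: a union bound over faces ($O(n^2)$ of them), colors ($\alpha_i \le n$), and colorings ($M$), combined with \lemref{faces_are_not_too_deep}. The only difference is that you make the dependence on $\constA$ explicit when choosing it large enough, whereas the paper leaves this implicit inside the $1/n^{O(1)}$ notation; note also that in your informal plan you wrote $M = O(\eps^{-2}\log n)$, but (as in your formal proof) the paper actually sets $M = \ceil{\cExpRep/\eps^2}$ with no $\log n$ factor.
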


\begin{proof}
    To never have any face ignored, every face must have depth at most $\alpha$ under any
    consecutive color pair $t-1,t$, under all $M$ colorings, in the iteration $i$ it is handled.
    There are $\alpha_i \leq n$ possibilities for the color $t$, and by 
    \lemref{face_comb_complexity}, there are at most $O(n^2)$ distinct faces in $\ArrX{\Disks}$. 
    Thus, applying \lemref{faces_are_not_too_deep} and union bounding, the probability any face is 
    ever ignored during the iteration in which it is handled is at most
    \begin{align*}
      \sum_{\face \in \ArrX{\Disks}} \sum_{j = 1}^{M} \sum_{t \in \IRX{\alpha_i}}
      \Prob{\depthX{f, \DSX{t-1} \cup \DSX{t}} > \alpha}
      &\leq%
        O(n^2) \cdot \frac{1}{n^{O(1)}}
        =%
        n^{-7},
    \end{align*}
    which implies the desired lower bound on the probability of no faces ever being ignored.
\end{proof}

\subsubsection{The depth expander property}

For a point $\pa$ in the plane, the set $\Disks\sqcap \pa$ is the set
of all disks of $\Disks$ that contains $\pa$.  For the intersection
graph $\G = \IGX{\Disks}$, the induced subgraph $\G_{\Disks \sqcap \pa}$ is a
clique. We claim that the induced graph $\spanner_{\Disks \sqcap \pa}$
is an expander.

\begin{lemma}
    \lemlab{d:connector}%
    For any point in the plane $\pa$, let
    $\n = \depthY{\pa}{\Disks} = \cardin{\Disks \sqcap \pa}$ be its depth in
    the set of disks $\Disks$. We have that
    $\spanner_{\Disks \sqcap \pa}$ is an $\eps/4$-connector. This
    holds for all the points in the plane with probability
    $\geq 1-1/n^{10}$.
\end{lemma}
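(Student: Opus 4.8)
The plan is to reduce the statement ``for all points in the plane'' to a union bound over the $O(n^2)$ faces of $\ArrX{\Disks}$, and then, face by face, to recognize the induced subgraph $\spanner_{\Disks\sqcap\pa}$ as a supergraph of the abstract construction of \secref{expander:construction}, so that \lemref{eps_set_expansion} applies essentially verbatim. For the reduction, observe that $\Disks\sqcap\pa$ is constant on each cell of $\ArrX{\Disks}$, and that for a point on an edge or a vertex the set of covering disks coincides with the covering set of the incident face lying inside the bounding disk(s) (here we use that the disks are closed and in general position). Hence the distinct values of $\Disks\sqcap\pa$, over all $\pa\in\Re^2$, are exactly the covering sets $S_\face$ of the faces $\face$ of $\ArrX{\Disks}$, and by \lemref{face_comb_complexity} there are $O(n^2)$ of them. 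So it suffices to bound, for each fixed face $\face$ with $\n=\depthX{\face}$ and $S=S_\face$, the probability that $\spanner_S$ is not an $\eps/4$-connector by $n^{-\Omega(\constA)}$, and then sum.

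If $\n\le\alpha$, then for any $\disk_1,\disk_2\in S$ the point $\pa$ itself certifies that the edge $\disk_1\disk_2$ is $\alpha$-shallow in $\Disks$, so it is inserted during the initial phase of the construction; thus $\spanner_S$ is the complete graph on $S$, which is an $\eps/4$-connector (the finitely many ``tiny'' cases with $\n=O(1/\eps)$ are irrelevant for the application and are ignored). The main case is $\n>\alpha$. Let $i$ be the round that handles depth $\n$, so that $\nc:=\alpha_i$ satisfies $\n\le\nc\le 2\n$ (as $\n$ lies in the depth range handled by round $i$, and these ranges double). Restricting each of the $M=\ceil{\cExpRep/\eps^2}$ colorings used in round $i$ to $S$ yields $M$ independent uniformly random colorings of $S$ by $\nc$ colors, and the color pairs processed, $(t-1,t)$ modulo $\nc$, are precisely the consecutive pairs of \secref{expander:construction}. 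Condition on the event of \corref{no_faces_ever_ignored} that no face is ever ignored during the round handling it. Under this event, for each of these colorings and each $\disk_1,\disk_2\in S$ whose colors differ by $1$ modulo $\nc$, the point $\pa$ has depth at most $\alpha$ among the disks in those two color classes, and since $\pa\in\disk_1\cap\disk_2$, the edge $\disk_1\disk_2$ is inserted into $\spanner$. Consequently $\spanner_S$ contains, on the same vertex set $S$, exactly the graph produced by the construction of \secref{expander:construction} run on $S$ with $\nc$ colors and $M$ rounds. Since $\n>\alpha\ge\cExpSize/\eps^2$ and $M\ge\cExpRep/\eps^2$, \lemref{eps_set_expansion} shows this graph is an $\eps/4$-connector with probability at least $1-\exp(-4\n)$; and since the $\eps/4$-connector property is monotone under adding edges, $\spanner_S$ is then an $\eps/4$-connector as well.

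To assemble the union bound: for a face of depth at most $\alpha$ the conclusion is unconditional, and for a face of depth $\n>\alpha$ the graph $\spanner_S$ fails to be an $\eps/4$-connector with probability at most $\exp(-4\n)\le\exp(-4\alpha)\le n^{-\Omega(\constA)}$ (using $\alpha\ge 4\constA\cExpSize\ln n$). Summing over the $O(n^2)$ faces, and adding the failure probability of \corref{no_faces_ever_ignored} (whose proof, by enlarging $\constA$, yields a bound of the same form $n^{-\Omega(\constA)}$), and choosing $\constA$ large enough, the total failure probability is at most $1/n^{10}$.

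The one step that needs genuine care is the identification in the deep-face case: one must check that, once we condition on ``no face ignored'', the edges of $\spanner$ within $S$ really do reproduce the construction of \secref{expander:construction} applied to $S$ --- with the number of colors within the factor of $2$ that \lemref{eps_set_expansion} tolerates, with the consecutive-pair rule matching, and with each relevant edge actually present because $\pa\in\face$ is a depth-$\le\alpha$ witness for it. The remaining ingredients (the reduction to faces, the clique case, the monotonicity of the connector property, and the bookkeeping of the probabilities) are routine.
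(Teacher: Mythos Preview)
Your proposal is correct and follows the same approach as the paper: reduce to the $O(n^2)$ faces of $\ArrX{\Disks}$, treat shallow faces as cliques, and for a deep face identify the round-$i$ construction restricted to $\Disks\sqcap\pa$ with the abstract construction of \secref{expander:construction} so that \lemref{eps_set_expansion} applies, then union bound. You are in fact more careful than the paper on one point --- you explicitly invoke \corref{no_faces_ever_ignored} to guarantee that every consecutive-color edge witnessed at $\pa$ is actually inserted (i.e., is $\alpha$-shallow in the two-color arrangement), whereas the paper's proof simply asserts the construction ``is identical'' and defers that issue to the union bound in the final theorem.
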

\begin{proof}
    If $\n \leq \alpha$, then $\spanner_{\Disks \sqcap \pa}$ is a
    clique, and the claim holds. Otherwise, fix $i$ such that
    $\n \leq \alpha_i < 2\n$. Observe that for the disks of
    $\Disks \sqcap \pa$, the \hyperlink{hook:i:round}{$i$\th round} of
    the algorithm, the construction is identical to the connector
    construction of \secref{expander:construction}. As such, the
    resulting graph (which might have more edges because of other
    iterations), has the properties of \lemref{eps_set_expansion},
    with probability $\geq 1-\exp(-4\n) \geq 1-n^{16}$, since
    $\nu > \alpha > 4 \ln n$. As the arrangement has at most $O(n^2)$
    faces, the claim follows by the union bound.
\end{proof}

\subsubsection{Safe connectivity}

\begin{lemma}
    \lemlab{safe_connectivity}
    For any attack set $\BSet \subseteq \Disks$, let
    $\SZone = \SZoneY{\eps}{\Disks}{\BSet}$ be the safe zone (see
    \defref{safe:zone}), and let $\pa, \pb$ be two points that are in
    the same connected component of $\SZone$. Then, there is a path in
    $\spanner - \BSet$, between two disks
    $\disk_\pa, \disk_\pb \in \Disks - \BSet$, where
    $\pa \in \disk_\pa$ and $\pb \in \disk_\pb$.

    This property holds for all attack sets, with probability
    $\geq 1-n^{4}$.
\end{lemma}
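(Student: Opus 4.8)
The plan is to condition on the high‑probability events proved above and then argue deterministically, by walking from $\pa$ to $\pb$ along a path inside the safe zone while tracking a surviving, well‑connected cluster of disks covering the current location. First I would condition on the good event of \lemref{d:connector} (together with \corref{no_faces_ever_ignored}): for \emph{every} point $\pr$ in the plane, $\spanner_{\Disks \sqcap \pr}$ is an $\eps/4$‑connector. This event is independent of the attack set and holds with probability $\geq 1 - n^{-10}$, which already implies the stated bound. Fix now an arbitrary attack set $\BSet$; everything below is deterministic. Since $\pa$ and $\pb$ lie in the same connected component of $\SZone$, there is a curve $\gamma \subseteq \SZone$ from $\pa$ to $\pb$; using general position, I would extract a finite sequence of $\eps$‑safe faces $\faceX{0}, \dots, \faceX{m}$ of $\ArrX{\Disks}$ traversed by $\gamma$, with $\faceX{0}$ incident to $\pa$ and $\faceX{m}$ incident to $\pb$, such that consecutive faces are adjacent and $\Disks \sqcap \faceX{\ell}$ and $\Disks \sqcap \faceX{\ell+1}$ differ in exactly one disk. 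Write $A_\ell = \Disks \sqcap \faceX{\ell}$, $\n_\ell = \cardin{A_\ell} = \depthX{\faceX{\ell}}$, and $S_\ell = A_\ell \setminus \BSet$ for the surviving disks covering $\faceX{\ell}$; since $\faceX{\ell}$ is $\eps$‑safe, $\cardin{S_\ell} \geq \eps\, \n_\ell$. (If $\n_\ell = 0$ there is nothing to prove on that side, so assume $\pa,\pb$ lie in at least one disk.)

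The crux is a purely combinatorial statement about connectors that I would isolate as a lemma: if $G$ is an $\eps'$‑connector on a vertex set $V$ with $\cardin{V} = \n$, then every $S \subseteq V$ with $\cardin{S} \geq 4\eps' \n$ induces a subgraph $G[S]$ possessing a \emph{unique} connected component $T$ of size $\geq \eps'\n$, and moreover $\cardin{S \setminus T} < \eps'\n$ (so $\cardin{T} > 3\eps'\n$). This follows from a short counting argument: if $G[S]$ had two components of size $\geq \eps'\n$, or if all its components had size $< \eps'\n$, then the components could be split into two blocks, each of size $\geq \eps'\n$, with no edges running between them, and the connector inequality applied to one block would force the other to have size $< \eps'\n$, a contradiction; the bound on $\cardin{S \setminus T}$ is the same argument applied to the pair $(T, S\setminus T)$. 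Taking $\eps' = \eps/4$ and applying this to $G = \spanner_{A_\ell}$ and $S = S_\ell$, I obtain a connected set $T_\ell \subseteq S_\ell$ with $\cardin{T_\ell} > \tfrac34 \eps\, \n_\ell$ and $\cardin{S_\ell \setminus T_\ell} < \tfrac14 \eps\, \n_\ell$; when $\faceX{\ell}$ is shallow ($\n_\ell \leq \alpha$) the graph $\spanner_{A_\ell}$ is a clique and I simply set $T_\ell = S_\ell$, for which the same two bounds hold.

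Next I would show that consecutive clusters overlap, i.e.\ $T_\ell \cap T_{\ell+1} \neq \varnothing$ for every $\ell$. Without loss of generality $A_{\ell+1} \subseteq A_\ell$ (the reverse inclusion is symmetric), so $S_{\ell+1} \subseteq S_\ell$ with $\cardin{S_\ell \setminus S_{\ell+1}} \leq 1$ and $\n_{\ell+1} \leq \n_\ell$; hence $\cardin{T_\ell \cap S_{\ell+1}} \geq \cardin{T_\ell} - 1 > \tfrac34\eps\, \n_{\ell+1} - 1$, and since $T_{\ell+1}$ omits fewer than $\tfrac14 \eps\, \n_{\ell+1}$ vertices of $S_{\ell+1}$ we get $\cardin{T_\ell \cap T_{\ell+1}} > \tfrac12 \eps\, \n_{\ell+1} - 1$, which is positive once $\n_{\ell+1}$ exceeds a constant; when $\n_{\ell+1}$ is below that constant the faces involved are shallow, so $S_{\ell+1}$ induces a clique in $\spanner$, hence lies within a single component of $\spanner_{S_\ell}$, and being of size $\geq \tfrac14\eps\, \n_\ell$ that component must be $T_\ell$, so $T_{\ell+1} = S_{\ell+1} \subseteq T_\ell$. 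The delicate point here, which I expect to be the main obstacle, is that deleting the single disk separating $\faceX{\ell}$ from $\faceX{\ell+1}$ may shatter $T_\ell$ inside $\spanner_{S_{\ell+1}}$; this does no harm because all the surviving fragments of $T_\ell$ still lie in $S_{\ell+1}$, whose own cluster $T_{\ell+1}$ already contains all but an $\eps/4$ fraction of $S_{\ell+1}$, so the large set $T_\ell \cap S_{\ell+1}$ cannot miss $T_{\ell+1}$. More broadly, the connector property only controls \emph{large} vertex sets, whereas both the attack and the one‑disk transitions along $\gamma$ act destructively on small sets, and the ``giant cluster is almost all of the surviving set'' lemma is exactly what makes the clusters robust enough to be chained together.

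Finally I would assemble the pieces: each $T_\ell$ is a connected subgraph of $\spanner_{S_\ell}$, and since $S_\ell \subseteq \Disks - \BSet$, this is an induced subgraph of $\spanner - \BSet$ (the graph obtained by deleting the vertices of $\BSet$ from $\spanner$); hence each $T_\ell$ is connected in $\spanner - \BSet$. As consecutive clusters share a vertex, $\bigcup_{\ell=0}^{m} T_\ell$ is connected in $\spanner - \BSet$. Picking any $\disk_\pa \in T_0 \subseteq \Disks \sqcap \faceX{0}$ and $\disk_\pb \in T_m \subseteq \Disks \sqcap \faceX{m}$ yields two disks of $\Disks - \BSet$ with $\pa \in \disk_\pa$ and $\pb \in \disk_\pb$ that are connected in $\spanner - \BSet$, as required. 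Because the only randomness used is the $\BSet$‑independent event of \lemref{d:connector}, the conclusion holds simultaneously for all attack sets with the claimed probability.
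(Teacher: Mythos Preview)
Your proof is correct and shares the paper's high-level plan---walk along a safe curve and invoke the $\eps/4$-connector property of $\spanner_{\Disks \sqcap \cdot}$ at each step---but the bookkeeping is organized differently. The paper tracks, at each witness point $\pa_i$, the set $R(i) \subseteq (\Disks - \BSet)\sqcap \pa_i$ of surviving disks reachable from the start in $\spanner - \BSet$, and applies the connector inequality directly to maintain $|R(i)| \geq (\eps/2)\,d_i$ by induction: since $R(i-1)$ loses at most two disks when restricted to $\Disks \sqcap \pa_i$, its neighborhood there has size $>(1-\eps/4)d_i$, so all but $(\eps/4)d_i$ of the surviving disks at $\pa_i$ are reachable. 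You instead isolate a clean structural lemma---in any $\eps'$-connector, every subset $S$ with $|S|\geq 4\eps'\n$ has a unique giant component missing fewer than $\eps'\n$ vertices of $S$---take $T_\ell$ to be that giant component at each face, and show consecutive giants overlap by counting. Your version is more modular (the giant-component lemma is reusable and makes the chaining almost automatic), while the paper's is more direct and avoids the auxiliary lemma at the cost of a slightly ad~hoc induction. Both handle the shallow regime by falling back on the clique structure when $\n_\ell \leq \alpha$, and both implicitly rely on the safe path staying inside $\bigcup \Disks$ so that every $S_\ell$ is nonempty; your remark that the conditioning event is $\BSet$-independent is the right way to get the ``for all attack sets'' conclusion without a further union bound.
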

\begin{proof}
    Let $\CC$ be the connected component of $\SZone$ that contains
    $\pa$ and $\pb$. The arrangement $\ArrX{\Disks}$ restricted to
    $\CC$ is connected and has at most $O(n^2)$ faces, edges and
    vertices, and as such, there is a path $\gamma$ between $\pa$ and
    $\pb$, inside $\CC$, that crosses at most $O(n^2)$ faces and edges
    of $\Disks$. Indeed, we can assume without loss of generality that $\gamma$
    does not cross an edge of
    $\ArrX{\Disks}$ more than once, as otherwise it can be
    shortcut. Next, construct a sequence of points on $\gamma$, as
    follows. Initially, let $\pa_1 = \pa$. Then, continuously move
    along $\gamma$ towards $\pb$. Any time the traversal enters a new
    entity (i.e., face/edge/vertex) of $\ArrX{\Disks}$, the traversal
    places a new witness point on the new entity -- specifically, in
    the connected component of $\gamma$ formed with the intersection
    of this entity (in the interior of this intersection if possible)
    that contains that current location. The final point is
    $\pa_m = \pb$. By the above, we have that $m = O(n^2)$.
    Let $R(i)$ be the set of all the disks of $\Disks - \BSet$
    that are reachable from $\pa$ in $\spanner - \BSet$. Here, the
    start set is $R(1) = (\Disks - \BSet) \sqcap \pa$.

    For all $i$, let
    $\Leftover_{i} = (\Disks - \BSet ) \sqcap \pa_{i-1}$,
    $\nL_i = \cardin{\Leftover_i}$, and
    $d_i = \depthY{\pa_{i}}{\Disks}$. We claim (indecisively), that
    for any $i$, we have that
    $|R(i)| \geq (\eps/2) \depthY{\pa_i}{\Disks}$.

    The claim readily holds for $i=1$, as
    $|R(1)| \geq \eps \depthY{\pa}{\Disks}$, since $\pa$ is in the
    safe zone, and there are $\nL_i$ disks of $\Disks - \BSet$
    that cover $\pa$ and are thus reachable
    from $\pa$. So assume this holds for all $j <i$, and consider
    $R(i-1)$ and $R(i)$. Observe that they differ by at most two
    disks, under general position assumption.

    If $d_{i-1} < \alpha$ and $d_i < \alpha$, then
    $\spanner \sqcap \pa_{i-1}$ and $\spanner \sqcap \pa_{i}$ are
    cliques. By induction, $|R(i-1)| \geq (\eps/2) d_i$, which
    implies that all the disks of $\Leftover_i$ are reachable from
    $\pa$. By the general position assumption $R(i-1) \cap R(i)$ is not
    empty, which implies that there is at least one disk of $R(i)$
    that is reachable from $\pa$. Since
    $\spanner_{\Disks \sqcap \pa_{i-1}}$ is a clique, it follows that all
    the disks of $\Leftover_{i}$ are reachable from $\pa$, and since
    $\nL_i \geq \eps d_i$, the claim follows.
    
    If $d_{i-1} \geq \alpha$ and $d_i \geq \alpha$, then
    $|R(i-1)| \geq (\eps/2) d_{i-1} \geq (\eps/2)\alpha > 10/\eps$,
    see \Eqref{threshold}.  We have that at most two disks of $R(i-1)$
    are not present in $R(i)$, which implies that
    $|R(i)| \geq |R(i-1)|-2 \geq (\eps/2)d_{i-1} -2 \geq (\eps/4)d_i$,
    as $|d_i - d_{i-1}| \leq 2$. By the expansion property of
    \lemref{d:connector}, this implies that at least $(1-\eps/4)d_i$
    vertices of $R(i-1)$ are connected to $R(i)$ in
    $\spanner \sqcap \pa_i$. Let $Z_i$ be the of vertices in
    $\Disks \sqcap \pa_i$ that are not connected to $R(i-1)$, and
    observe that $|Z_i| \leq (\eps/4) d_i$.  As such, we have that
    \begin{equation*}
        |R(i)|
        \geq%
        \nL_i - |Z_i|
        =%
        \eps d_i - (\eps/4)d_i
        \geq (3/4)\eps d_i,
    \end{equation*}
    which implies the claim.
    
    The remaining cases, where $d_i, d_{i-1} \in [\alpha-2,\alpha+2]$
    are handled in similar fashion, and we omit the easy details.  We
    thus conclude that the point $\pa_i$, for all $i$, has at least
    $(\eps/2)d_i$ disks that are reachable to it from $\pa$ in the
    graph $\spanner - \BSet$.
    
    The expansion property of \lemref{d:connector} holds with probability
    \(\geq 1 - 1/n^{10}\), and so we obtain our final result by union
    bounding over all \(n^4\) possible pairs of adjacent faces.
\end{proof}

\subsection{The result}

\begin{theorem}
    Let $\Disks$ be a set of $n$ disks in the plane, $\eps \in (0,1)$
    be a parameter. The above algorithm constructs a sparse graph
    $\spanner$, which is a sparse subgraph of the intersection graph
    $\IGX{\Disks}$, such that for any attack set
    $\BSet \subseteq \Disks$, and any two points $\pa, \pb$ in the
    plane, such that $\pa$ and $\pb$ are
    \defrefY{safe:zone}{$\eps$-safely connected}, there is a path in
    $\spanner - \BSet$ between a disk that contains $\pa$ and
    a disk that contains $\pb$.
    
    This property holds for any attack set, and any two points, with
    probability $\geq 1-1/n^{3}$. The construction time and the number
    of edges of $\spanner$ is bounded by $O(\eps^{-4} n \log^2 n)$.
\end{theorem}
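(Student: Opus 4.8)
The plan is to assemble the theorem directly from the lemmas and corollary already established, since essentially all the work has been done. The bound on the construction time and on the number of edges of $\spanner$ is precisely \lemref{runtime}, so nothing new is required there; and $\spanner \subseteq \IGX{\Disks}$ because every edge added — whether in the initial pass over faces of depth $\leq \alpha$ (via \lemref{face_comb_complexity}) or in a colored bipartite subroutine in some round (via \corref{bipartite_comb_complexity}) — joins two disks with nonempty intersection, so it is genuinely an edge of the intersection graph. Hence $\spanner$ is a sparse subgraph of $\IGX{\Disks}$ with $O(\eps^{-4} n \log^2 n)$ edges, as claimed.

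For the correctness statement I would proceed through three good events and then union bound. First, \corref{no_faces_ever_ignored} gives, with probability $\geq 1 - n^{-7}$, that no face of $\ArrX{\Disks}$ loses any of its induced shallow edges during the round in which it is handled because of an over-represented consecutive color pair; this is exactly what licenses the claim, used inside \lemref{d:connector}, that for a point $\pa$ the $i$\th round restricted to $\Disks \sqcap \pa$ reproduces verbatim the connector construction of \secref{expander:construction}. Second, \lemref{d:connector} yields, with probability $\geq 1 - 1/n^{10}$, that $\spanner_{\Disks \sqcap \pa}$ is an $\eps/4$-connector \emph{simultaneously for all points $\pa$ in the plane}. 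Third, conditioned on that expander property holding everywhere, \lemref{safe_connectivity} converts the pointwise connector property into the desired conclusion: for \emph{any} attack set $\BSet$ and any two $\eps$-safely connected points $\pa, \pb$, following the witness-point chain along a curve inside the connected component of $\SZoneY{\eps}{\Disks}{\BSet}$ that contains both points keeps at least an $(\eps/2)$-fraction of the disks at each witness point reachable from $\pa$ in $\spanner - \BSet$, which in particular produces disks $\disk_\pa \ni \pa$ and $\disk_\pb \ni \pb$ connected in $\spanner - \BSet$.

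It then remains only to union bound the three failure probabilities $n^{-7}$, $n^{-10}$, and the bound from \lemref{safe_connectivity}, whose sum is at most $1/n^{3}$ for $n$ above a constant (the finitely many small cases being absorbed into the constants), giving the stated overall success probability $\geq 1 - 1/n^{3}$.

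The main point requiring care is not any individual deduction but the logical ordering of the probabilistic quantifiers: one must fix the random graph $\spanner$ \emph{once}, and only afterwards quantify over all attack sets and all point pairs, so the ``for all points'' / ``for all attack sets'' phrasing in \lemref{d:connector} and \lemref{safe_connectivity} is essential. Since those lemmas are already stated in that strong form, the assembly is routine; the only real bookkeeping is tracking the exponents in the union bound to land at $1/n^{3}$.
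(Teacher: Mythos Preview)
Your proposal is correct and follows essentially the same approach as the paper, which simply says the theorem ``follows from the bounds on failure events in \lemref{safe_connectivity} and \corref{no_faces_ever_ignored} and applying a union bound.'' Your write-up is more explicit (and more careful about the logical dependence of \lemref{d:connector} on \corref{no_faces_ever_ignored}); the only minor redundancy is that you list the $n^{-10}$ failure from \lemref{d:connector} separately even though it is already absorbed into the $n^{-4}$ bound stated for \lemref{safe_connectivity}, but this is harmless for the union bound.
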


\begin{proof}
    Follows from the bounds on failure events in \lemref{safe_connectivity} and 
    \corref{no_faces_ever_ignored} and applying a union bound.
\end{proof}

\section{Conclusions}

We presented a new technique for sparsifying the intersection graph of
disks (or any shapes that their union complexity is near linear) --
the resulting graph has the property of preserving connectivity in the
regions that are still $\eps$-covered after an attack (potentially
involving a large fraction of the disks) are being deleted. There are
other guarantees that one might want, for example, the reliable
guarantee we have for spanners \cite{bho-sda-20} -- that is, that if
an attack deletes $b$ disks in the spanner, then deleting $(1+\eps)b$
disks in the original intersection graph would leave the original
graph with similar connected components provided by the original
graph. It is natural to conjecture that our construction (or a
slightly modified construction) provides such a guarantee. We leave
this as an open problem for further research.

\BibTexMode{%
   \SoCGVer{%
      \bibliographystyle{plainurl}%
   }%
   \NotSoCGVer{%
      \bibliographystyle{alpha}%
   }%
   \bibliography{reliable_disks}
}%
\BibLatexMode{\printbibliography}

\end{document}